\def\doi#1{\href{https://doi.org/\detokenize{#1}}{\url{https://doi.org/\detokenize{#1}}}}
\newcommand{\imp}{\supset}
\newcommand{\coimp}{\prec}
\newcommand{\Sa}{\Rightarrow} % sequent arrow
\newcommand{\Ra}{\Sa}
\newcommand{\BiInt}{\mathbf{BiInt}}
\newcommand{\SF}{\mathbf{S5}}
\newcommand{\LK}{\mathbf{LK}}
\newcommand{\cut}{cut}
\newcommand{\weak}{w}
\newcommand{\contr}{contr}
\newcommand{\hyp}{init}
\newcommand{\leftpremise}{\delta_1}
\newcommand{\rightpremise}{\delta_2}
\begin{document}

\title{A theory of cut-restriction: first steps}

\author
{
Agata Ciabattoni\inst{1}\orcidID{0000-0001-6947-8772} \and
Timo Lang\inst{2}\orcidID{0000-0002-8257-968X} \and
Revantha Ramanayake\inst{3}\orcidID{0000-0002-7940-9065}
}

\authorrunning{A. Ciabattoni et al.}

\institute{
TU Vienna, Austria \email{agata@logic.at}
\and
University College London, United Kingdom
\email{timo.lang@ucl.ac.uk}
\and
University of Groningen, Netherlands\\
\email{d.r.s.ramanayake@rug.nl}
}

\maketitle 

\begin{abstract}
Cut-elimination is the bedrock of  proof theory. It is the algorithm that eliminates cuts from a sequent calculus proof that leads to cut-free calculi and applications.
Cut-elimination applies to many logics irrespective of their semantics. Such is its influence that whenever cut-elimination is not provable in a sequent calculus the invariable response has been a move to a richer proof system to regain it. 
In this paper we investigate a radically different approach to the latter: adapting age-old cut-elimination to restrict the shape of the cut-formulas when elimination is not possible.
We tackle the ``first level'' above cut-free: analytic cuts.
Our methodology is applied to the sequent calculi for bi-intuitionistic logic and $S5$ where analytic cuts are already known to be required.
This marks the first steps in a theory of cut-restriction.

\end{abstract}

\keywords{Sequent Calculus \and Cut-Elimination \and Analytic Cut \and Bi-intuitionistic logic \and S5}

\section{Introduction}
Cut-elimination is {\em the} fundamental result of proof theory. At its heart, it is an
algorithm for transforming any proof into an analytic proof, i.e. the only formu-
las that occur in the proof are subformulas of the final statement. This is done by eliminating the most common source of non-analyticity: the cut-rule.
Though larger in size, analytic proofs are better behaved and more amenable to meta-theoretic investigation, especially because the space of proofs under consideration is greatly constrained. Gentzen's motivation in the 1930's was a finitistic proof of the consistency of arithmetic (paving the way for ordinal analysis) but the influence of cut-elimination is far beyond that. The cut-free calculi that are the offspring of cut-elimination are central in structural proof theory where they are used to prove properties of the underlying logic (e.g., consistency, decidability, upper bounds, various flavours of interpolation, and disjunction properties), and appear as semantic tableaux and within automated theorem proving. 
%The algorithm, extended also to higher-order logic, can often be seen as a computational process under a Curry-Howard isomorphism.

%Gentzen's cut-elimination arguments are well-known to the logic community: stepwise transformations replace cut-rules in the proof by simpler cut-rules, and an appeal to (transfinite) induction ultimately yields a cut-free proof.

Let us pick up the story in the decades following Gentzen's seminal result.
Cut-elimination was originally proved for the sequent calculi for classical and intuitionistic logic IL. 
The programme of developing cut-free calculi via cut-elimination was soon extended to non-classical logics, notably modal logics. The first significant obstacle was encountered in the early 1950's: how to eliminate cuts in the proof calculus for the modal logic $S5$? 
While syntactic decision arguments for $S5$ were obtained (see Ohnishi and Matsumoto~\cite{OhnMat57,OhnMat59}), it seems that 
the first proof of cut-elimination was by Mints~\cite{Min68} who extended the metalanguage of %the proof calculus from 
Gentzen's sequent calculus to essentially what is today called the hypersequent calculus. Hypersequent calculi were rediscovered independently in the 1980's, by Pottinger~\cite{Pot83} to obtain cut-elimination for modal logics, and Avron~\cite{Avr87}, who applied them to relevant, modal, and intermediate logics. The floodgates had been opened: when cut-elimination cannot be obtained for a logic of interest, the standard response in structural proof theory is to look for (or invent) a new proof formalism by extending the metalanguage of the sequent calculus such that cut-elimination \textit{does} hold. Most proof formalisms have been introduced in the last three decades in this way; nested, labelled, bunched, linear, tree-hypersequent, display sequent, and many more.

A question that has received little attention in all this time is whether it is possible to adapt cut-elimination to simplify those cuts that are not eliminable. This is {\em cut-restriction}, an idea (and terminology) already introduced\footnote{
There, an algorithm to obtain sequent calculi with restricted cuts (parametric on the end formula but not necessarily subformulas) for many families of logics was introduced by essentially composing two algorithms: the algorithms to transform logic specifications into cut-free  hypersequent calculi~\cite{CiaGalTer08,Lah13}, and the reduction of cut-free hypersequent calculi to sequent calculi with restricted cuts. This is evidence that cut-restriction is \textit{in principle} possible on a broad and ambitious scale.} 
in Ciabattoni et al.~\cite{CiaLanRam21} but not pursued.
It is the subject of this paper.

%Any syntactic stepwise method for restricting the set of cut formulas in a proof in a meaningful way would constitute such a generalisation. 
%Cut-elimination emerges as a special case.
%Takano~\cite{Tak92} in 1922 gives a syntactic argument for restricting arbitrary cuts to analytic cuts (i.e. cuts on subformulas of the rule conclusion) in the context of $S5$ 

%In fact, to the best of our knowledge 
A result in the spirit of cut-restriction is contained in Takano's 1992 paper~\cite{Tak92} on the analytic cut property (cuts can be restricted to subformulas of the rule conclusion) of $S5$. % contains a
%is the only direct 
%syntactic proof that could be viewed as cut-restriction.  
Even there, the argument does not appear to have the generic character that is a hallmark of Gentzen's arguments.
%but the connection with cut-elimination is not clear (to be fair, that was not a motivation).
Other results restricting cuts rely on the semantics of the logics under consideration, e.g., \cite{Fit78,FisLad79,KowOno17,AvrLahav,Tak01,BezGhi14,Tak20}.
In contrast, Gentzen's cut-elimination method applies rather uniformly across diverse proof systems irrespective of the logic's semantics. A major motivation for choosing to adapt cut-elimination as the path to restricting cuts is the ambition that it too will be applicable as broadly.

In addition to the theoretical interest in adapting this most fundamental notion in proof theory, there are several further motivations.

Cut-restriction is an alternative to moving to proof formalisms more complex than the sequent calculus to obtain cut-elimination.
New proof formalisms are designed so that the cut-elimination proof goes through, but it is always possible that the same could have been achieved using a simpler formalism.
%(enabling simpler meta-theoretic argumentation). 
For example, the hypersequent calculus for $S5$ manipulates arbitrarily many component sequents although it turns out that only two are in fact required~\cite{Ind19}. 
%\agata{As is well known, the fruitfulness of cut-free proofs in a new %system is always relative to the extended structure that the system %brings with it}
Additional structure in the proof formalism is often a hindrance to proving metalogical results, and for limiting the proof search space.
%%One disadvantage of the approaches extending the sequent calculus 
%Another disadvantage of the approaches extending the sequent calculus structure is that they are prone to over-generalisation. That is, the structural extension is designed in such a way that a proof of cut-elimination goes through, but it is always possible that the same could have been achieved already in a simpler structural extension by employing a more involved cut-reduction. Indeed, this is the case for the hypersequent calculus for $S5$ which allows for arbitrarily many component sequents, whereas it was later observed that actually just two of them suffice for cut-free completeness~\cite{Ind19}.
At this stage we do not intend to compare restricted cuts with additional structure in a proof formalism but we do note that decidability and complexity arguments proceed in the presence of finitely many sequent cuts, and interpolation proofs too,  at least in the case of analytic (or almost analytic) cuts, e.g.,~\cite{Tak92,KowOno17}.
In addition, restricted sequent cuts may also be useful from a computational viewpoint~\cite{DAgoMon94}, and the usual blow-up in the size of cut-free proofs can be alleviated if one admits analytic cuts~\cite{dAgo92}.

In this paper we 
%take first steps towards a theory of cut-reduction and
%A question never addressed in all this time is whether it is possible to adapt cut-elimination 
%to simplify cuts that are not eliminable. % We call  the resulting method {\em cut-reduction}. 
tackle the first ``level'' of cut-restriction above cut-free proofs: 
from arbitrary cuts to analytic cuts. 
The restriction to analytic cuts is of particular conceptual interest because it corresponds directly to the subformula property~\cite{KowOno17}, whereas cut-elimination is a strictly stronger property.
We achieve this restriction by replacing Gentzen's permutation reductions with suitable intermediary cuts. The idea
is first discussed informally in Section~\ref{sec-idea}.
%In Section~\ref{sec-idea} we explain the idea of how cut-elimination can be adapted to obtain cut-restriction.
Our method is presented
%Next we present our method 
using as main case study the sequent calculus~\cite{Rau74} for bi-intuitionistic logic---a conservative extension of intuitionistic logic with the connective~$\coimp$ dual to implication. 
The main ingredients for the cut-restriction proof to go through are identified in Section~\ref{attempt} and the argument is adapted to the case of the modal logic $S5$.
%Ciabattoni \textit{et al.} presented an indirect proof-of-principle by transforming cut-free hypersequent calculi for substructural, intermediate, and modal logics to sequent calculi with cut-formulas that were parametric on the end formula (not necessarily subformulas). The space of possible sequent calculus proofs was still constrained enough to obtain meta-theoretic results such as decidability, complexity, and logical embeddings.
%Cut-reduction was mentioned but left open. 
%
%In this paper we establish cut-reduction in the context of bi-intuitionistic logic by showing how adapt the cut-elimination arguments to stepwise transform a proof with arbitrary cuts to a proof with analytic cuts.
We emphasise that completeness of these sequent calculi with analytic cuts is already known. For bi-intuitionistic logic it was independently proved using semantics in Kowalski and Ono~\cite{KowOno17} and Avron and Lahav~\cite{AvrLahav}. The problem of establishing this result using proof-theoretic methods---accomplished in this paper---was left open in~\cite{PintoU18} where it was shown that infinitely many cut-formulas of a certain (co)implicational form ensures completeness. For $S5$, the result dates back to Takano~\cite{Tak92} whose proof has inspired our work. %(Fitting~\cite{Fit78} used semantic argument to obtain the weaker result that cuts on finitely many axiom instances suffices).
%
%bi-intuitionistic logic was already shown independently in \cite{KowOno17,AvrLahav}, although establishing a proof using proof-theoretic methods was left open in~\cite{PintoU18}, and that for $S5$ is known since~\cite{Tak92}, which inspired our work. 
%
%itself based on Takano's proof of analytic cut completeness for $S5$~\cite{Tak92}. 
The point in this paper is that  we uncover how to obtain these results as an adaptation of age-old cut-elimination. 
These are the first steps in the theory of cut-restriction.

\section{Cut-Elimination adapted to Cut-Restriction: the idea}\label{sec-idea}

Gentzen's cut-elimination argument is well known: stepwise applications of reductions replace cut-rules in the proof by smaller cut-rules with respect to a well-founded relation. An appeal to (transfinite) induction ultimately yields a cut-free proof.
%To eliminate cuts in sequent calculi 
These stepwise reductions come in two flavours: \textit{permutation} and
\textit{principal} reduction. The former shifts a cut one step upwards in either the left premise or the right premise.
Following repeated applications, the situation is reached of a cut in which the cut-formula is \textit{principal} (i.e.
created by the rule immediately above it) in both premises. The principal reduction is now used to replace that cut with cuts on proper subformulas.

%Even when cut-elimination holds in a calculus,
The principal reductions depend on the shape of the introduction rules and in some cases they can be hard to find. This is what happens with the modal rule in provability logic $GL$, for example: the change in polarity of the diagonal formula from conclusion to premise necessitates a highly intricate and customised principal reduction~\cite{Val83}.
Here we will consider cut-restriction for sequent calculi with standard rules hence the principal reductions are unproblematic. Therefore we shift our attention to permutation reductions. 

When permutation reductions fail, it is usually because a rule has some condition that is violated when it is shifted from above the cut to below it. The rules in Gentzen's LJ calculus for IL have no such conditions so the permutation reductions are unproblematic. In Maehara's calculus for IL---a multiple-conclusion sequent calculus 
%built from sequents of the form $\Gamma\Ra\Delta$ where $\Gamma,\Delta$ are multisets---
in which \textit{the succedent of the ${\imp}_R$ rule permits no context}, see, e.g.~\cite{Takeuti:87}--- some permutation reductions \textit{do} fail. Consider the following cut
(we use the notation $\cut^*$ to indicate weakening followed by $\cut$):
\begin{equation}
\label{original-cut}
\text{
\AxiomC{$\leftpremise$}
\noLine
\UnaryInfC{$\Gamma \Ra A\imp B,\Delta$}
\AxiomC{$\rightpremise$}
\noLine
\UnaryInfC{$\Gamma, A\imp B, C \Ra D$}
\RightLabel{$\imp_R$}
\UnaryInfC{$\Gamma, A\imp B \Ra C\imp D$}
\RightLabel{$\cut^*$}
\BinaryInfC{$\Gamma \Ra C\imp D,\Delta$}
\DisplayProof
}
\end{equation}
If we attempt to shift this cut upwards in the right premise we get
\begin{center}
\AxiomC{$\leftpremise$}
\noLine
\UnaryInfC{$\Gamma \Ra A\imp B,\Delta$}
\AxiomC{$\rightpremise$}
\noLine
\UnaryInfC{$\Gamma, A\imp B, C \Ra D$}
\RightLabel{$\cut^*$}
\BinaryInfC{$\Gamma, C \Ra D,\Delta$}
\DisplayProof
\end{center}
This does not work because from $\Gamma, C \Ra \Delta, D$ we cannot obtain $\Gamma\Ra \Delta, C\imp D$: the context~$\Delta$ in the former blocks the application of the ${\imp}_R$ rule.
The solution is to repeatedly shift the cut upwards in the left premise $\leftpremise$ until the cut-formula is principal there. %in the left premise (the end-sequent remains $\Gamma, C \Ra D,\Delta$).
It is not an issue if we encounter a ${\imp}_R$ in the left premise because this rule 
%that we encounter 
must be the one that creates the cut-formula. 

There is another way of expressing the composite move: trace the predecessors of the $A\imp B$ in the left premise to identify the inference that creates it. Assume for simplicity that there is just a single such \textit{critical inference}:
\begin{center}
\AxiomC{$\Sigma, A \Ra B$}
\RightLabel{$\imp_R$}
\UnaryInfC{$\Sigma \Ra A\imp B$}
\noLine
\UnaryInfC{$\leftpremise$}
\noLine
\UnaryInfC{$\Gamma \Ra A\imp B,\Delta$}
\AxiomC{$\Gamma, A\imp B, C \Ra D$}
\RightLabel{$\imp_R$}
\UnaryInfC{$\Gamma, A\imp B \Ra C\imp D$}
\RightLabel{$\cut^*$}
\BinaryInfC{$\Gamma \Ra C\imp D,\Delta$}
\DisplayProof
\end{center}
Now shift the cut up to the critical inference to get the following picture.
\begin{center}
\AxiomC{$\Sigma, A \Ra B$}
\RightLabel{$\imp_R$}
\UnaryInfC{$\Sigma \Ra A\imp B$}
\AxiomC{$\Gamma, A\imp B, C \Ra D$}
\RightLabel{$\imp_R$}
\UnaryInfC{$\Gamma, A\imp B \Ra C\imp D$}
\RightLabel{$\cut^*$}
\BinaryInfC{$\Sigma, \Gamma \Ra C\imp D$}
\noLine
\UnaryInfC{$+\leftpremise$}
\noLine
\UnaryInfC{$\Gamma, \Gamma \Ra C\imp D,\Delta$}
\RightLabel{$\contr$}
\UnaryInfC{$\Gamma \Ra C\imp D,\Delta$}
\DisplayProof
\end{center}
Here $+\leftpremise$ indicates that the sequents in the `featured branch' in $\leftpremise$---i.e. the branch from $\Sigma\Ra A\imp B$ to $\Gamma\Ra A\imp B,\Delta$---are enriched with an additional multiset (in this case $\Gamma$) in the antecedent. 
Since Maehara's calculus has no rule with a context restriction in the antecedent, $+\leftpremise$ is well-defined despite the additional multiset.
What next? \textit{Trace the predecessors of} $A\imp B$ in the right premise of cut until $A\imp B$ becomes principal in an inference (once again for simplicity we assume a single such critical inference).
\begin{equation}\label{right-premise-trace}
\text{
\AxiomC{$\Sigma, A \Ra B$}
\RightLabel{$\imp_R$}
\UnaryInfC{$\Sigma \Ra A\imp B$}
\AxiomC{$\Sigma' \Ra A,\Pi$}
\AxiomC{$\Sigma', B \Ra \Pi$}
\RightLabel{$\imp_L$}
\BinaryInfC{$\Sigma', A\imp B \Ra \Pi$}
\noLine
\UnaryInfC{$\rightpremise$}
\noLine
\UnaryInfC{$\Gamma,A\imp B,C\Sa D$}
\RightLabel{$\imp_R$}
\UnaryInfC{$\Gamma, A\imp B \Ra C\imp D$}
\RightLabel{$\cut^*$}
\BinaryInfC{$\Sigma, \Gamma \Ra C\imp D$}
\noLine
\UnaryInfC{$+\leftpremise$}
\noLine
\UnaryInfC{$\Gamma, \Gamma \Ra C\imp D,\Delta$}
\RightLabel{$\contr$}
\UnaryInfC{$\Gamma \Ra C\imp D,\Delta$}
\DisplayProof
}
\end{equation}
Shifting the cut up to the critical inference in the right premise we reach
\begin{center}
\AxiomC{$\Sigma, A \Ra B$}
\RightLabel{$\imp_R$}
\UnaryInfC{$\Sigma \Ra A\imp B$}
\AxiomC{$\Sigma' \Ra A,\Pi$}
\AxiomC{$\Sigma', B \Ra \Pi$}
\RightLabel{$\imp_L$}
\BinaryInfC{$\Sigma', A\imp B \Ra \Pi$}
\RightLabel{$\cut^*$}
\BinaryInfC{$\Sigma, \Sigma' \Ra \Pi$}
\noLine
\UnaryInfC{$+\rightpremise$}
\noLine
\UnaryInfC{$\Sigma, \Gamma, C \Ra D$}
\RightLabel{$\imp_R$}
\UnaryInfC{$\Sigma, \Gamma \Ra C\imp D$}
\noLine
\UnaryInfC{$+\leftpremise$}
\noLine
\UnaryInfC{$\Gamma, \Gamma \Ra C\imp D, \Delta$}
\RightLabel{$\contr$}
\UnaryInfC{$\Gamma\Ra C\imp D, \Delta$}
\DisplayProof
\end{center}
Here again, $+\rightpremise$ indicates that the sequents along the featured branch in $\rightpremise$ are enriched with an additional multiset in the antecedent. % Note that $+\rightpremise$ is well-defined because the ${\imp}_R$ restriction applies only to the succedent.
The cut-formula is principal in both the left and right premise so we use the principal reduction:
\begin{equation}\label{cut-elim-principal}
\text{
%\AxiomC{$\Sigma, A \Ra B$}
%\UnaryInfC{$\Sigma \Ra A\imp B$}
%
%\AxiomC{$\Sigma', A\imp B \Ra \Pi, A$}
%\BinaryInfC{$\Sigma, \Sigma', \Ra \Pi, A$}
\AxiomC{$\Sigma' \Ra A,\Pi$}
\AxiomC{$\Sigma, A \Ra B$}
\RightLabel{$\cut^*$}
\BinaryInfC{$\Sigma, \Sigma', \Ra B,\Pi$}
\AxiomC{$\Sigma', B \Ra \Pi$}
%\BinaryInfC{$\Sigma,\Sigma, \Sigma', \Sigma'\Ra \Pi, \Pi$}
%\RightLabel{ctr}
%\UnaryInfC{$\Sigma, \Sigma'\Ra \Pi$}
\RightLabel{$\cut^*$}
\BinaryInfC{$\Sigma, \Sigma'\Ra \Pi$}
\noLine
\UnaryInfC{$+\rightpremise$}
\noLine
\UnaryInfC{$\Sigma, \Gamma, C \Ra D$}
\RightLabel{$\imp_R$}
\UnaryInfC{$\Sigma, \Gamma \Ra C\imp D$}
\noLine
\UnaryInfC{$+\leftpremise$}
\noLine
\UnaryInfC{$\Gamma, \Gamma \Ra C\imp D,\Delta$}
\RightLabel{$\contr$}
\UnaryInfC{$\Gamma\Ra C\imp D,\Delta$}
\DisplayProof
}
\end{equation}
The transformation from (\ref{original-cut}) to (\ref{cut-elim-principal}) expresses cut-elimination in terms of predecessors of the cut-formulas and principal reduction.
This perspective will help us to adapt the argument to the situation where the parametric reductions are no longer sound and the calculus does not admit cut-elimination. 

We encounter precisely this situation in the well-known sequent calculus~\cite{Rau74} for bi-intuitionistic logic.
%Notice that we could have proceeded from the original proof (\ref{original-cut}) to this in a single transformation by directly tracing the $A\imp B$ upwards in both premises. Even so, the correctness of the transformation relies crucially on $+\rightpremise$ being well defined (and implicitly, a suitable strategy for the permutation reductions).
This calculus is obtained from the Maehara calculus by extending the language with \textit{coimplication}~$\coimp$ and adding the rules $\coimp_L$ and $\coimp_R$ in Fig.~\ref{fig:BiInt}.
%\begin{center}
%\begin{tabular}{c@{\hspace{1cm}}c}
%\AxiomC{$A\Ra \Gamma,B$}
%\RightLabel{$\coimp$L}
%\UnaryInfC{$A\coimp B\Ra \Gamma$}
%\DisplayProof
%&
%\AxiomC{$U\Ra V, A$}
%\AxiomC{$B, U\Ra V, A\coimp B$}
%\RightLabel{$\coimp$R}
%\BinaryInfC{$U\Ra V, A\coimp B$}
%\DisplayProof
%\end{tabular}
%\end{center}
Notice: the \textit{antecedent of the $\coimp_L$ rule permits no context}. From (\ref{cut-elim-principal}) alone we can see that the argument above will no longer work: $+\leftpremise$ is not well-defined as the additional multiset in the antecedent may block an application of $\coimp_L$. 

Our ultimate aim is to
%The technical content of this paper 
adapt the picure in (\ref{cut-elim-principal}) to obtain a proof of $\Gamma\Ra C\imp D,\Delta$ replacing the non-well-defined derivations by cuts on subformulas of this sequent.
%({\em analytic cuts}, cf.~Def.~\ref{def-anal}).
%The formal proof is presented in Section~\ref{Sec:main}.}
The following adaptation already brings us closer. Here $\land \Sigma$ is the conjunction of all formulas in~$\Sigma$ and $\Sigma \Ra \land \Sigma$ has an easy cut-free proof.
\begin{equation}\label{cut-reduction-simple}
\text{
\AxiomC{$\Sigma \Ra \land \Sigma$}
\noLine
%\UnaryInfC{$\vdots$}
\UnaryInfC{$\leftpremise[\land \Sigma]$}
\noLine
\UnaryInfC{$\Gamma \Ra \Delta, \land \Sigma$}
% UFFA!!!!
\AxiomC{$\Sigma' \Ra A,\Pi$}
\AxiomC{$\Sigma, A \Ra B$}
\RightLabel{$\cut^*$}
\BinaryInfC{$\Sigma, \Sigma', \Ra B,\Pi$}
\AxiomC{$\Sigma', B \Ra \Pi$}
\RightLabel{$\cut^*$}
\BinaryInfC{$\Sigma, \Sigma'\Ra \Pi$}
%\AxiomC{$\Sigma, \Sigma'\Ra \Pi$}
\RightLabel{some $\land_L$'s}
\UnaryInfC{$\land \Sigma, \Sigma'\Ra \Pi$}
\noLine
\UnaryInfC{$+\rightpremise$}
\noLine
\UnaryInfC{$\land \Sigma, \Gamma, C \Ra D$}
\RightLabel{$\imp_R$}
\UnaryInfC{$\land \Sigma, \Gamma \Ra C\imp D$}
\RightLabel{$\cut^*$}
\BinaryInfC{$\Gamma\Ra \Delta, C\imp D$}
\DisplayProof
}
\end{equation}
We denote by $\leftpremise[\land \Sigma]$ the derivation obtained by
replacing in the derivation $\leftpremise$ all the predecessors of the cut-formula $A \imp B$ (including the cut-formula itself) with
$\land \Sigma$. This derivation is sound as the potentially problematic calculus rules are $\imp_R$ and $\coimp_L$ and: a $\imp_R$ application could not have occurred along the featured branch in $\delta_1$ as the traced cut-formula would have blocked it, and placing $\land\Sigma$ in the succedent will not disturb any applications of $\coimp_L$.

%The key issue is: what can we say about the cut on $\land \Sigma$? We cannot expect $\land \Sigma$ to be a subformula of the endsequent. What we actually do it to introduce many cuts on the formulas in $\Sigma$ (exponentially many in $|\Sigma|$), rather than on $\land \Sigma$. Moreover we start reducing a uppermost non analytic cut in the derivation (that is a cut whose cut-formula is not in its conclusion, see Def.~\ref{def-anal}), which we now assume is the one with end-formula $\Gamma\Ra \Delta, C\imp D$. This guarantees that even when $\Sigma$ contains formulas that trace from cuts below the originally chosen cut (hence violating the global subformula property of the proof), the newly introduced cut is analytic, and no cut above it, that was analytic before the proof transformation, becomes non-analytic anymore. We proceed this way, by making all cuts in the proof analytic.

The key issue is: what can we say about the cut on $\land \Sigma$? It is not immediate that every formula in $\Sigma$ is a subformula of the endsequent as non-subformulas might trace from cuts below the originally chosen cut, or be an $A\imp B$ that traces from an \textit{analytic cut} (this is a cut where the cut-formula is a subformula of its conclusion c.f. Def.~\ref{def-anal}) on~$A\imp B$ \textit{within} $\delta_{1}$. The latter is a subtle point that arises because we are working in the setting where analytic cuts are not eliminable.
And even if they are subformulas, we cannot expect $\land \Sigma$ to be a subformula of the endsequent. The solution to these issues will---in the general case---require selecting an uppermost non-analytic cut and the introduction of many cuts on the formulas in $\Sigma$ (in fact, exponentially many in $|\Sigma|$), rather than on $\land \Sigma$. 

In summary, our strategy to adapt cut-elimination to cut-restriction is to
retain the principal reductions, replace permutation reductions by tracing the predecessors of the cut-formulas, using only those enriched branches that are well-defined, and applying cut to remove formulas that cannot propagate down.

%by retaining the principal reductions, and replacing permutations by tracing the predecessors of the cut-formulas, using only those enriched branches that are well-defined, and carefully applying cut \agata{to subformulas to supplant the other branches}, we adapt age-old cut-elimination to cut-restriction.

\section{The case of bi-intuitionistic logic}

\begin{figure}[t]
\fbox{
\begin{minipage}{\textwidth}

Initial rule and cut:
\[
\infer[\hyp\quad\text{($p$ atomic)}]{p\Sa p}
	{}
\hspace{2cm}
\infer[\cut]{\Gamma\Sa\Delta}
	{
	\Gamma\Sa A,\Delta
	&
	\Gamma,A\Sa\Delta
	}
\]
Structural rules:
\[
\infer[\weak]{\Gamma,\Sigma\Sa\Delta,\Pi}
	{
	\Gamma\Sa\Delta
	}
\hspace{2cm}
\infer[\contr]{\Gamma,\Sigma\Sa\Delta,\Pi}
	{
	\Gamma,\Sigma,\Sigma\Sa\Delta,\Delta,\Pi
	}
\]
Logical rules:
\[
\infer[\top_R]{\Sa\top}
	{}
\hspace{2cm}
\infer[\bot_L]{\bot\Sa}
	{}
\]
\[
\infer[\land_L]{\Gamma,A\land B\Sa\Delta}
	{
	\Gamma,A,B\Sa\Delta
	}
\hspace{2cm}
\infer[\land_R]{\Gamma\Sa A\land B,\Delta}
	{
	\Gamma\Sa A,\Delta
	&
	\Gamma\Sa B,\Delta
	}
\]
\[
\infer[\lor_L]{\Gamma,A\lor B\Sa\Delta}
	{
	\Gamma,A\Sa\Delta
	&
	\Gamma,B\Sa\Delta
	}
\hspace{2cm}
\infer[\lor_R]{\Gamma\Sa A\lor B,\Delta}
	{
	\Gamma\Sa A,B,\Delta
	}
\]
\[
\infer[\imp_L]{\Gamma,A\imp B\Sa\Delta}
	{
	\Gamma\Sa A,\Delta
	&
	\Gamma, B\Sa\Delta
	}
\hspace{2cm}
\infer[\imp_R]{\Gamma\Sa A\imp B}
	{
	\Gamma,A\Sa B
	}
\]
\[
\infer[\coimp_L]{A\coimp B\Sa\Delta}
	{
	A\Sa B,\Delta
	}
\hspace{2cm}
\infer[\coimp_R]{\Gamma\Sa A\coimp B,\Delta}
	{
	\Gamma\Sa A,\Delta
	&
	\Gamma,B\Sa \Delta
	}
\]
\end{minipage}}

\caption{The sequent calculus $\BiInt$.}
\label{fig:BiInt}
\end{figure}
%\marginpar{maybe move these sentences to the previous section?}
We present our method using the sequent calculus $\BiInt$  for bi-intuitionistic logic as a case study (Fig.~\ref{fig:BiInt}).
%with its history the case study is also interesting in its own.
%
Bi-intuitionistic logic---also known as Heyting–Brouwer logic or subtractive logic---is the conservative extension of intuitionistic logic with a $\coimp$ connective that is dual to implication ({\em coimplication}).
%The language of bi-intuitionistic logic extends that of intuitionistic logic through the addition of coimplication~$\coimp$. 
Proposed in~\cite{Rau74} the 
calculus $\BiInt$ extends Maehara's multiple-conclusion calculus for intuitionistic logic (built from sequents of the form $\Gamma\Ra\Delta$ where $\Gamma,\Delta$ are multisets)
with left and right rules for $\coimp$. A counterexample to the admissibility of cut stated in~\cite{Rau74}
(and in~\cite{Crolard04}) was identified by Uustalu and Pinto~\cite{PintoU09}. Failure of cut-elimination prompted a search for analytic calculi for bi-intuitionistic logic
using formalisms extending the meta-language of the sequent calculus~\cite{BuiGor07}, including nested~\cite{GorPosTiu08},  and  labelled sequents~\cite{PintoU09}. Relationships between the calculi were studied in \cite{PintoU18} where the problem of finding a syntactic proof of the completeness of 
$\BiInt$ with analytic cuts  was left open (see~\cite{AvrLahav,KowOno17} for semantic proofs).

%An explicit distinction between a rule and a rule instance will be made only where required. 

 The notion of a \emph{subformula} is defined as usual. $A$ is a \emph{proper subformula} of $B$ if it is a subformula and $A\neq B$. For a set or sequent $X$, we say that $A$ is a subformula of $X$ if $A$ is a subformula of one of the formulas appearing in $X$. 

The rules of the $\BiInt$ calculus in 
Fig.~\ref{fig:BiInt} are actually rule schemata. With an abuse of notation we use $\Gamma, \Delta, \dots$ both for formula multisets and for formula multiset variables. The formulas instantiating $\Gamma$ and $\Delta$ in the rules are called  \emph{context}. We have chosen an \emph{additive} formulation of all rules including cut. This means that contexts are always copied to both premises instead of being distributed among the premises. This is a design choice: in the presence of contraction and weakening, we can freely move between the additive and multiplicative formulations.
Already in Section~\ref{sec-idea} we have adopted the convention of writing $\cut^*$ to abbreviate inferences of $\weak$ followed by $\cut$, e.g.
\begin{center}
    \AxiomC{$\Gamma,\Delta\Sa A,\Sigma,\Pi$}
        \AxiomC{$\Gamma,\Delta',A\Sa \Sigma,\Pi'$}
        \RightLabel{$\cut^*$}
        \BinaryInfC{$\Gamma,\Delta,\Delta'\Sa\Sigma,\Pi,\Pi'$}
    \DisplayProof
\end{center}

The exposed formula in the conclusion of a logical rule is called the \emph{principal formula} of that rule. An \emph{initial sequent} of $\BiInt$ is the conclusion of a rule without premise, i.e. $p\Sa p$, $\Sa\top$ or $\bot\Sa$. These are standard definitions. The \emph{logical constants} $\top$ and $\bot$ are not counted as atomic propositions, so in particular $\top\Sa\top$ and $\bot\Sa\bot$ do not constitute initial sequents.

\begin{remark}
Similar to classical logic, the connectives and rules of $\BiInt$ come in dual pairs: $(\top,\bot),(\land,\lor)$, and $(\imp,\coimp)$. 
%The pairs of connectives $(\top,\bot)$ and $(\land,\lor)$ are \emph{dual} in classical logic in the sense that the sequent rules for one connective are exactly the `mirror image' of the rule for its dual, a fact already pointed out by Gentzen~\cite{Gen34}. In $\BiInt$ this duality is extended to the pair $(\imp,\coimp)$.
Using a formula translation $(\cdot)^\sharp$ that replaces each connective by its dual and inverts the left/right hand side in implications, i.e.\ $(A\imp B)^\sharp=(B^\sharp\coimp A^\sharp)$, $(A\coimp B)^\sharp=(B^\sharp\imp A^\sharp)$  yields that $\Gamma\Sa\Delta$ is provable in $\BiInt$ iff $\Delta^\sharp\Sa\Gamma^\sharp$ is. Due to this symmetry, when proving properties of $\BiInt$ it often suffices to consider just one half of the set of connectives, as the remaining cases are in a precise sense ``the same''---namely, up to $(\cdot)^\sharp$.
\end{remark}

%In what follows we will distinguish between 
%proofs and derivations, referring to 
%Derivations and proofs are defined in the usual way, and with respect to 
%$\BiInt$, unless otherwise noted.
\begin{definition}
A \emph{$\BiInt$-proof} of a sequent $\Gamma\Sa\Delta$ is a tree of sequents rooted in $\Gamma\Sa\Delta$ (the \emph{endsequent}) that is composed of rule instances of $\BiInt$, and such that all leaves are initial sequents of $\BiInt$.
%such that each interior node together with its node(s) immediately above constitutes a rule instance of $\BiInt$. 
% A \emph{$\BiInt$-proof} is a derivation in which all leaves are initial sequents. 
\end{definition}
%As usual, the \emph{height} of a proof or derivation is the maximum length of a branch in it.
%
Note that in $\hyp$ we require $p$ to be an atomic formula. By an easy induction on the formula $A$ we can prove completeness of
atomic axioms (\textit{axiom expansion}).

\begin{lemma}\label{lem:genaxioms}
The sequent $A\Sa A$ is derivable in $\BiInt$ without cuts.
\end{lemma}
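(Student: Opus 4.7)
The plan is a routine induction on the structure of the formula $A$. Note that weakening ($\weak$) is available, so I can freely pad sequents; the only delicate point is the context restrictions on the rules $\imp_R$ (empty succedent context) and $\coimp_L$ (empty antecedent context), and I must make sure the inductive steps for these two connectives respect them.

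For the base cases: if $A$ is an atomic proposition $p$ then $p \Sa p$ is an initial sequent. If $A = \top$, combine the axiom $\Sa \top$ with $\weak$ to obtain $\top \Sa \top$. The case $A = \bot$ is symmetric, using $\bot \Sa$ and $\weak$.

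For the inductive step, assume $B \Sa B$ and $C \Sa C$ are cut-free derivable.
\begin{itemize}
\item $A = B \land C$: weaken to get $B, C \Sa B$ and $B, C \Sa C$, apply $\land_R$ to reach $B, C \Sa B \land C$, then $\land_L$ to conclude $B \land C \Sa B \land C$.
\item $A = B \lor C$: dually, use $\lor_L$ followed by $\lor_R$ with appropriate weakenings.
\item $A = B \imp C$: weaken to $B \Sa B, C$ and $B, C \Sa C$, apply $\imp_L$ (with context $B$) to derive $B \imp C, B \Sa C$, then apply $\imp_R$ to obtain $B \imp C \Sa B \imp C$. The $\imp_R$ step is legal because the succedent at that point is exactly $C$, with no extra context.
\item $A = B \coimp C$: dually, weaken to $B \Sa C, B$ and $B, C \Sa C$, apply $\coimp_R$ (with context $C$ in the succedent) to derive $B \Sa C, B \coimp C$, then apply $\coimp_L$ to obtain $B \coimp C \Sa B \coimp C$. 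Here the $\coimp_L$ step is legal because the antecedent is exactly $B$, with no extra context.
\end{itemize}

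No step introduces a cut, so the resulting proof is cut-free. The only step that requires any thought is arranging, in the $\imp$ and $\coimp$ cases, that the context-restricted rule is applied \emph{last} so that its conclusion has the required shape; by the symmetry remark one really only needs to check one of these two cases and transfer via $(\cdot)^\sharp$. I do not foresee any obstacle beyond this bookkeeping.
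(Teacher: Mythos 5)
Your proof is correct and follows the same route the paper intends: the paper dismisses the lemma as ``an easy induction on the formula $A$'' (axiom expansion), and your case analysis is exactly that induction, with the $\imp$/$\coimp$ cases correctly arranged so that the context-restricted rule ($\imp_R$ resp.\ $\coimp_L$) is applied last to a premise of the required shape. Nothing further is needed.
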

%\begin{proof}
%By induction on $A$.\qed
%\end{proof}

The following definition of ``analytic cut'' goes back to~\cite{Smu68}.
\begin{definition} %[analytic cut~\cite{KowOno17}]
\label{def-anal}
An instance of the cut rule
\[
\infer[\cut]{\Gamma \Sa\Delta}
	{
	\Gamma\Sa A,\Delta
	&
	\Gamma,A\Sa\Delta
	}
\]
is \emph{analytic} if $A$ is a subformula of $\Gamma\cup\Delta$. A proof is \emph{locally analytic} if every cut in the proof is analytic.
A sequent calculus has the \emph{analytic cut property} if every provable sequent has a locally analytic proof.
\end{definition}

%The \emph{subformula property} of a sequent system states that every sequent provable in the system has a proof in which all occuring formulas are subformulas of the endsequent. 

Any calculus whose other rules are analytic\footnote{In the sense that every formula in a premise is a subformula of the endsequent.} has the following property: every formula occurring in a locally analytic proof 
%If a proof is locally analytic then it is easily seen\footnote{Implicitly using the observation that every rule of $\BiInt$ that is not the cut-rule is ``analytic" in the sense that every formula in the premise is a subformula of the endsequent} that 
is a subformula of the endsequent. In other words,
the analytic cut property implies the \emph{subformula property}.
This holds, in particular, for $\BiInt$. The nontrivial converse direction was shown in~\cite[Lem. 4.6]{KowOno17} using an elegant induction argument.

%\section{Basic properties of $\BiInt$}
%\input{basicproperties}

\subsection{Proof of Cut-Restriction}
\label{Sec:main}

The analytic cut property for $\BiInt$ is an immediate corollary of the following main theorem which formalises the cut-restriction proof % for $\BiInt$ 
described in Section~\ref{sec-idea}.

%We use the standard bi-intuitionistic sequent calculus with an explicit contraction and weakening rule. In particular, initial sequents are $p \Ra  p$, and the principal formula of $\imp_L$ is not repeated in either premise.
\begin{theorem}
\label{thm:main}
Every proof in $\BiInt$ that is locally analytic aside from a single non-analytic $cut$ as its lowest inference can be transformed into a locally analytic proof of the same endsequent.
\end{theorem}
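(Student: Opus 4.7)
The plan is to adapt the template of Section~\ref{sec-idea}, especially diagram~(\ref{cut-reduction-simple}), into a formal inductive argument. I would induct on the lexicographically ordered pair $(|A|,\,|\leftpremise|+|\rightpremise|)$, where $|A|$ measures the complexity of the cut-formula and $|\leftpremise|+|\rightpremise|$ the size of the two subderivations feeding the bottom cut. In the base cases---$A$ atomic or a constant---every predecessor of $A$ in $\leftpremise$ or $\rightpremise$ can be traced back to an initial sequent or a weakening, and gluing at these points eliminates the bottom cut outright without introducing any new ones.

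For the main inductive step I would treat $A=C\imp D$ as the representative case. The $\coimp$ case follows by the $(\cdot)^\sharp$ duality noted in the Remark; the $\land$ and $\lor$ cases are easier because their rules impose no context restriction; $\top,\bot$ are immediate. Tracing the predecessors of $A$ upward through $\leftpremise$, the presence of $A$ in the succedent along the featured branch rules out $\imp_R$ firing on any other principal formula below the critical inference, so each trace terminates either at a weakening (to be pruned) or at an $\imp_R$-critical inference with premise $\Sigma,C\Sa D$; symmetrically, tracing in $\rightpremise$ turns up $\imp_L$-critical inferences. Pairing critical inferences and applying the principal reduction produces cuts on the proper subformulas $C$ and $D$, which are resolvable by the outer induction hypothesis on $|A|$.

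Globally, the transformation follows the shape of diagram~(\ref{cut-reduction-simple}), except that the single cut on $\land\Sigma$ is replaced by a cascade of cuts on the individual formulas $\sigma\in\Sigma$, with Lemma~\ref{lem:genaxioms} supplying the auxiliary derivations $\sigma\Sa\sigma$ needed to splice the cascade together. Since every cut above the bottom one is analytic, each $\sigma\in\Sigma$ is a subformula of $\Gamma\cup\Delta\cup\{A\}$, and three cases arise: if $\sigma$ is a subformula of $\Gamma\cup\Delta$ the new cut is already analytic; if $\sigma$ is a proper subformula of $A$ it is resolved by the outer induction on $|A|$; and if $\sigma$ is $A$ itself---which can happen when $\leftpremise$ contains an analytic cut on $A$---the new cut sits above strictly smaller subderivations than the original $\leftpremise,\rightpremise$, so the inner induction on proof size applies.

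The technical heart of the argument, and where I expect the main difficulty, is certifying the soundness of the branch enrichment: replacing the traced $A$ in $\leftpremise$ by another succedent formula, and symmetrically adjusting $\rightpremise$ along its featured branch, must never violate the context restrictions of $\imp_R$ or $\coimp_L$. Along the featured branch of $\leftpremise$ the succedent already carries $A$, so $\imp_R$ could not have fired below the critical inference, giving room to modify the succedent without damage; $\coimp_L$ in turn restricts only the antecedent, which is left untouched. The dual observation covers $\rightpremise$. Combining this local soundness analysis with the case split on $\sigma\in\Sigma$ and the two-layer induction yields the desired locally analytic proof of $\Gamma\Sa\Delta$.
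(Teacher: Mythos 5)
Your overall architecture (trace predecessors, perform the principal reduction across branches, then close with a cascade of cuts on the formulas of the critical antecedent, splitting into the cases $\sigma$ subformula of $\Gamma\cup\Delta$ / proper subformula of the cut-formula) is the paper's strategy, but there is a genuine gap at exactly the point you flag as possible: $\sigma$ equal to the cut-formula $A\imp B$ itself. You propose to discharge this case by the inner induction on $|\leftpremise|+|\rightpremise|$, claiming the new cut ``sits above strictly smaller subderivations''. It does not: the two derivations feeding that new cut are not subderivations of $\leftpremise,\rightpremise$ at all, but the \emph{outputs} of the outer induction hypothesis applied to trees obtained by grafting copies of subderivations of $\leftpremise$ into $\rightpremise$ (and by further auxiliary cuts), and the induction hypothesis gives no bound on the size of the locally analytic proofs it returns. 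So neither component of your lexicographic measure decreases and the recursion has no termination argument; a fresh non-analytic cut on $A\imp B$ of unchanged complexity simply reappears. The paper's proof rules this case out rather than recursing on it: a preprocessing step (\emph{irredundance}: any cut on the cut-formula whose conclusion contains a predecessor of it is replaced by a contraction) makes it possible to prove a \emph{tameness} property --- every formula in a critical antecedent of $\leftpremise$ is a subformula of $\Gamma\cup\Delta$ or a proper subformula of $A\imp B$ --- by tracing a hypothetical occurrence of $A\imp B$ in $\Sigma_r$ downwards and contradicting the non-analyticity of the bottom cut. Some such device excluding $\sigma=A\imp B$ is indispensable, and your proposal lacks it; this is precisely the ``subtle point'' the paper's Section~\ref{sec-idea} warns about.

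Two further problems. Your base case is also unsound as stated: ``gluing'' one premise's derivation into the other at the leaves $p\Sa p$ (or at weakenings) is a permutation-style move, and it forces the contexts $\Gamma,\Delta$ of the glued endsequent to be carried down the featured branch; the extra succedent $\Delta$ can block an $\imp_R$ occurring on a featured branch of $\rightpremise$, and dually the extra antecedent $\Gamma$ can block a $\coimp_L$ on a featured branch of $\leftpremise$ --- the very failure that motivates cut-restriction in $\BiInt$. The paper instead makes the atomic cut \emph{analytic} rather than eliminating it: since $\BiInt$ is consistent, $\Gamma\cup\Delta\neq\emptyset$, so one substitutes $p$ uniformly throughout $\delta$ by an atom or constant that is a subformula of $\Gamma\cup\Delta$ (repairing any resulting $\top\Sa\top$ or $\bot\Sa\bot$ leaves), leaving the endsequent unchanged because the cut was non-analytic. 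Finally, your ``cascade of cuts on the individual $\sigma\in\Sigma$'' tacitly assumes a single critical inference in $\leftpremise$; with critical antecedents $\Sigma_1,\ldots,\Sigma_n$ one needs a locally analytic proof of $\Gamma\Sa D_1,\ldots,D_n,\Delta$ for \emph{every} tuple $(D_1,\ldots,D_n)\in\Sigma_1\times\cdots\times\Sigma_n$, combined by the exponentially many cuts of the paper's Step~3, not a single cascade.
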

\begin{proof}
Let $\delta$ be a proof ending in a non-analytic cut 
\[
\infer[\cut]{\Gamma\Sa \Delta}
	{
	\infer*[\delta_1]{\Gamma\Sa C,\Delta}{}
	&
	\infer*[\delta_2]{\Gamma, C\Sa \Delta}{}
	}
\]
and assume that the subproofs $\delta_1,\delta_2$ are locally analytic. The statement is proved by induction on the size of the cut-formula.

We distinguish cases according to the shape of $C$.
%If the cut formula is not a (co)implication, one of the following cases applies:

$\blacktriangleright$ $C$ is an atomic formula $p$.
%The cut is on a variable:
%\[
%\infer[\cut]{\Gamma\Sa \Delta}
%	{
%	\infer*[\delta]{\Gamma\Sa p,\Delta}{}
%	&
%	\infer*[\epsilon]{\Gamma,p\Sa \Delta}{}
%	}
%\]
Since $\BiInt$ is consistent, $\Gamma\cup\Delta \not=\emptyset$. Replace \textit{every} occurrence of $p$ 
in $\delta$ by some atomic formula or constant occuring as a subformula in $\Gamma\cup\Delta$. In doing so the lowermost cut becomes analytic, and all cuts in $\delta_1,\delta_2$ remain analytic. The endsequent is unchanged because by assumption the cut on $p$ was non-analytic. If we replaced $p$ by a constant $x\in\{\top,\bot\}$ and in doing so created a non-initial sequent $x\Sa x$ from $p\Sa p$, use $\top_R/\bot_L$ and weakening to obtain an initial sequent.

\emph{Predecessors.} If the cut formula $C$ is not atomic, we will \emph{trace} occurrences of $C$ upwards in the proofs $\delta_1$ and $\delta_2$. Formally, define the  \textit{predecessor relation}\footnote{This is also called the \textit{parametric ancestor relation}.} matching context occurences of $C$ in the conclusion of an inference to the corresponding occurrence(s) in the premise(s). Stop when the occurrence $C$ is principal in the conclusion of logical inference (a \emph{critical inference of $\delta$}), or if $C$ is removed by a weakening rule. We will not encounter an initial sequent $C\Ra C$ because we have assumed that $C$ is not atomic. Note that we consider the occurrence of the cut-formula in the premise of the cut as the first predecessor.

There may be multiple predecessors in a sequent since a predecessor in the conclusion of a contraction rule can become two predecessors in its premise. Clearly all predecessors in $\delta_1$ occur in the succedent while all predecessors in $\delta_2$ occur in the antecedent. 
%We consider the cut formula $C$ in the lowermost inference of $\delta$ among the predecessors.

\newcommand{\irredundance}{\textbf{irredundance}\xspace}
Without loss of generality we will assume the following property:
\begin{description}
    \item[\irredundance:] No sequent in $\delta$ that contains a predecessor of $C$ appears as the conclusion of a cut with cut-formula $C$.
\end{description}
This is justified because any such cut can be replaced by a contraction, e.g.,
\[
\infer[\cut]{\Sigma,C\Sa \Pi}
    {
    \Sigma,C\Sa C,\Pi
    &
    \Sigma,C,C\Sa\Pi
    }
\hspace{0.75cm}\text{to}\hspace{0.75cm}
\infer[\contr]{\Sigma,C\Sa\Pi}
    {
    \Sigma,C,C\Sa\Pi
    }
\]

Moreover for $\delta_*\in\{\delta_1,\delta_2\}$ we denote by $\delta_*[\Sigma]$ the tree (which is not necessarily a proof) resulting from substituting all predecessors of $C$ in $\delta_*$ by the multiset of formulas $\Sigma$. The following properties hold:

\newcommand{\sound}{\textbf{pre-soundness}\xspace}
\newcommand{\reduce}{\textbf{reductivity}\xspace}

\begin{description}
    \item[\sound:] Every inference in $\delta_*$ \emph{with the possible exception of the critical ones} remains a sound inference of the same rule in $\delta_*[\Sigma]$.
    \item[\reduce:] If an analytic cut in $\delta_*$ becomes non-analytic in $\delta_*[\Sigma]$, then that cut is on a \emph{proper} subformula of $C$.
\end{description}

In every sequent that is not the conclusion of a critical inference, the predecessor instantiates the context of the inference. By inspection of the rules, replacing a formula occurring in the context of an inference rule with any multiset preserves the soundness of that inference hence \sound.

If an analytic cut becomes non-analytic then the cut-formula must be a subformula of~$C$. It cannot be $C$ itself because of the irredundance property so it must be a proper subformula and hence \reduce holds.

With these preparations, we now resume the case distinction.

$\blacktriangleright$ $C$ is $\top$. Observe that there are no critical inferences in $\delta_2$ as we do not have a rule for $\top$ in the antecedent, hence $\top$ has been introduced by weakening. By \sound $\delta_2[\emptyset]$ is a $\BiInt$-proof of $\Gamma\Sa\Delta$. Since $\top$ has no proper subformulas, \reduce implies that $\delta_2[\emptyset]$ is locally analytic.

$\blacktriangleright$ $C$ is $\bot$.
Similar to the previous case (by symmetry).

$\blacktriangleright$ $C$ is $A\land B$.
Here we can use a standard invertibility-style argument. Consider the trees $\delta_1[A],\delta_1[B]$ and $\delta_2[A,B]$ with their respective roots $\Gamma\Sa A,\Delta$ and $\Gamma\Sa B,\Delta$ and $\Gamma,A,B\Sa \Delta$. With some minor modifications we can make all the trees into $\BiInt$ proofs (by \sound it suffices to check critical inferences). For example, let 
\[
\infer[\land_R]{\Sigma\Sa A\land B,(A\land B)^*,\Pi}
    {
    \Sigma\Sa A,(A\land B)^*,\Pi
    &
    \Sigma\Sa B,(A\land B)^*,\Pi
    }
\]
be a critical inference  in $\delta_1$. Here $(A\wedge B)^*$ indicates some number of predecessors (possibly none) that are not the principal formula, and $\Sigma,\Pi$ contains no predecessors. Then in $\delta_1[A]$ we have the unsound inference
\[
\infer[unsound]{\Sigma\Sa A,A^*,\Pi}
    {
    \Sigma\Sa A,A^*,\Pi
    &
    \Sigma\Sa B,A^*,\Pi
    }
\]
Nevertheless the above can simply be replaced using the left premise $\Sigma\Sa A,A^*,\Pi$. Similar reasoning applies to $\delta_1[B]$ and $\delta_2[A, B]$. By an abuse of notation, let $\delta_1[A],\delta_1[B]$ and $\delta_2[A,B]$ indicate below the $\BiInt$-proofs where these modifications have been carried out. 
%\agata{Note that the leafs are still axioms, as $p$ is atomic in {\em init}}.
We then obtain the following cut-restriction:
%(as in Sec.~\ref{sec-idea}, $cut\ast$ denotes possible applications of weakening followed by cut):
\[
\infer[\cut]{\Gamma\Sa\Delta}
    {
    \infer*[\delta_1{[A]}]{\Gamma\Sa A,\Delta}{}
    &
    \infer[\cut^*]{\Gamma,A\Sa\Delta}
        {
        \infer*[\delta_1{[B]}]{\Gamma\Sa B,\Delta}{}
        &
        \infer*[\delta_2{[A,B]}]{\Gamma,A,B\Sa\Delta}{}
        }
    }
\]
%Here and henceforth $\cut^*$ denotes weakenings followed by $\cut$. 
By \reduce, all non-analytic cuts in this proof are on proper subformulas of $A\land B$, and hence eliminable by induction hypothesis.

$\blacktriangleright$ $C$ is $A\lor B$.\\
Similar to the previous case (by symmetry).

$\blacktriangleright$ $C$ is $A\imp B$.
% \begin{center}
% \AxiomC{$\Gamma \Ra  D, A\imp B$}
% \AxiomC{$A\imp B, X \Ra  Y$}
% \RightLabel{cut}
% \BinaryInfC{$\Gamma, X \Ra  D, Y$}
% \DisplayProof
% \end{center}
% Denote by $\delta_{A\imp B}$ the lower subtree of $\delta$ that contains all sequents with \textit{predecessors}\footnote{Also appearing in the literature as the \textit{parametric ancestor relation}.} of the cut-formula $A\imp B$. 
% Observe that there may be multiple predecessors in a sequent since a predecessor in the conclusion of a contraction rule becomes two predecessors in its premise. Every predecessor in $\delta_{A\imp B}$ is ultimately principal by $\imp_R$ (a \textit{critical inference}) or weakening (a \textit{non-critical inference}). 

% Analogously, define $\epsilon_{A\imp B}$ as the lower subtree of $\epsilon$ that contains all sequents with predecessors of the cut-formula $A\imp B$. 

This case is the crucial part of the proof. Define the \textit{critical antecedent} and \textit{critical succedent}
of a critical inference as the set of formulas in its antecedent and succedent
of the conclusion, respectively,
omitting predecessors (if any).
E.g. suppose that the following is a critical inference in $\delta_2$ containing exactly two predecessors in its conclusion.

 \begin{center}
 \AxiomC{$A\imp B, \Sigma \Ra  \Pi, A$}
 \AxiomC{$A\imp B, B, \Sigma \Ra  \Pi, A$}
 \RightLabel{$\imp_L$}
 \BinaryInfC{$A\imp B, A\imp B, \Sigma \Ra \Pi$}
 \DisplayProof
 \end{center}

Then its critical antecedent is $\Sigma$ and its critical succedent is $\Pi$. We remark that due to the shape of $\imp_R$, each critical inference in $\delta_1$ contains exactly one predecessor---the principal formula of $\imp_R$---and the critical succedent is empty:
\begin{center}
\AxiomC{$A, \Sigma \Ra  B$}
\RightLabel{$\imp_R$}
\UnaryInfC{$\Sigma \Ra  A\imp B$}
\DisplayProof
\end{center}
Thus no two critical inferences in $\delta_1$ appear on the same branch.

% Denote by $\botpart{\delta}{A\imp B}[\Gamma]$ the result of replacing all predecessors of $A\imp B$ in the derivation $\botpart{\delta}{A\imp B}$ by the multiset $\Gamma$. Observe that $\botpart{\delta}{A\imp B}[\Gamma]$ is a well-defined $\BiInt$ derivation
% %This substitution is unproblematic 
% as $A\imp B$ is never principal in $\botpart{\delta}{A\imp B}$.

% Without loss of generality (let us call this the \textit{irredundancy property}) we can assume that neither the derivation $\delta_{A\imp B}$ nor $\epsilon_{A\imp B}$ contains a cut on $A\imp B$ since such a cut can be replaced by a contraction rule e.g.
% \begin{center}
% \AxiomC{$\Sigma \Ra  D', A\imp B, A\imp B$}
% \AxiomC{$A\imp B, \Sigma' \Ra  D''$}
% \RightLabel{cut}
% \BinaryInfC{$\Sigma, \Sigma' \Ra  D', D'', A\imp B$}
% \DisplayProof
% \end{center}
% is replaced by 
% \begin{center}
% \AxiomC{$\Sigma \Ra  D', A\imp B, A\imp B$}
% \RightLabel{c}
% \UnaryInfC{$\Sigma \Ra  D', A\imp B$}
% \RightLabel{w}
% \UnaryInfC{$\Sigma, \Sigma' \Ra  D', D'', A\imp B$}
% \DisplayProof
% \end{center}

\newcommand{\critical}{\textbf{tameness}\xspace}
%We make the following important observation.
We show the following important property:
\begin{description}
    \item[\critical:] Every formula in the critical antecedent $\Sigma$ of a critical inference in $\delta_1$ is a subformula of $\Gamma\cup \Delta$ or a proper subformula of $A\imp B$.
\end{description}

Since $\delta_1$ is locally analytic by assumption, we know that every formula in it is a subformula of $\Gamma\cup\Delta\cup\{A\imp B\}$. So the only case we have to exclude for \critical is that the critical antecedent contains $A\imp B$. Indeed: Assume towards a contradiction 
that $D= A\imp B \in\Sigma$, and follow the occurences of $D$
%that $D\in\Sigma$ equals $A\imp B$ and follow the occurence $D$ 
downwards in $\delta_1$. Being in the antecedent, $D$ cannot become the $A\imp B$ in the succedent of the endsequent; but also $D$ cannot appear as a subformula of $\Gamma\cup\Delta$ as this would violate the non-analyticity of the lowermost cut. It follows that $D$ must be removed by a cut in $\delta_1$. By \irredundance, this cut cannot be on $D=A\imp B$ itself, so it must be on a formula $D'_{A\imp B}$ containing $A\imp B$ as a proper subformula. Again by local analyticity of $\delta_1$, $D'_{A\imp B}$ must be a subformula of $\Gamma\cup\Delta\cup\{A\imp B\}$. 
%Clearly $D'_{A\imp B}$ cannot be a subformula of $A\imp B$ since it contains $A\imp B$ as a proper subformula, so $D'_{A\imp B}$ must be a subformula of $\Gamma\cup\Delta$. 
Then $D'_{A\imp B}$ must be a subformula of $\Gamma\cup\Delta$ and hence so too is $A\imp B$, contradicting the non-analyticity of the cut in $\delta$.

% Since $D$ is in the antecedent, it cannot become the $A\imp B$ in the succedent of the root of $\delta_1$. Moreover, no formula $D'$ containing $D$ as a subformula can occur in $\Gamma\cup\Delta$ as by assumption the lowermost cut in $\delta$ is non-analytic. Hence $D$ must disappear from the proof somewhere below the endsequent, and this must be by a cut. As argued above, this cut cannot be on a formula $D'$ containing $D$ as a proper subformula, and so it must be a cut on $D$ itself. But this is ruled out by the irredundance property.

%(The proof is the same as before: every formula~$F$ in the critical antecedent is a subformula of $\Gamma\Ra D,A\imp B$; if $F=A\imp B$ then under the descendent relation there is some superformula~$F'$ of $A\imp B$ that is a subformula of $\Gamma\Ra D,A\imp B$; a contradiction is obtained whether~$F'$ is a subformula of $\Gamma\cup D$ or not).

Now, let $\Sigma_1,\ldots,\Sigma_n$ be all the critical antecedents in $\delta_1$ (i.e. the antecedents of all the critical inferences in~$\delta_1$).
%Now list all the critical antecedents in $\delta_1$ (i.e. over all the critical inferences in~$\delta_1$) as $\Sigma_1,\ldots,\Sigma_n$. 
%We describe a family of proof transformations.

\textbf{Step 1.} For $r\le n$ fix the $r$-th critical inference in $\delta_1$ with subderivation $\delta_1^r$
\begin{center}
\AxiomC{$\vdots\,\delta_1^r$}
\noLine
\UnaryInfC{$\Sigma_r,A \Ra  B$}
\RightLabel{$\imp_R$}
\UnaryInfC{$\Sigma_r \Ra  A\imp B$}
\DisplayProof
\end{center}
and turn to $\delta_2$. Every critical inference in $\delta_2$ has the following form where $A\imp B^{*}$ indicates other predecessors (possibly none) and $\Sigma$ contains no predecessors.
\begin{center}
\AxiomC{$\Sigma, A\imp B^{*} \Ra  A, \Pi$}
\AxiomC{$\Sigma, A\imp B^{*}, B \Ra  \Pi$}
\RightLabel{$\imp_L$}
\BinaryInfC{$\Sigma, A\imp B^{*}, A\imp B \Ra  \Pi$}
\DisplayProof
\end{center}

Obtain the tree $\delta_2[\Sigma_r]$ with root $\Gamma,\Sigma_r\Sa\Delta$ by replacing all critical formulas in $\delta_2$ with the critical antecedent $\Sigma_r$. We need to modify $\delta_2[\Sigma_r]$ into a $\BiInt$-proof. By $\sound$ it suffices to inspect critical inferences in $\delta_2$. In $\delta_2[\Sigma_r]$ each such inference becomes the unsound
\begin{center}
\AxiomC{$\Sigma, \Sigma_r^{*} \Ra  A, \Pi$}
\AxiomC{$\Sigma, \Sigma_r^{*}, B \Ra  \Pi$}
\RightLabel{$unsound$}
\BinaryInfC{$\Sigma, \Sigma_r^*, \Sigma_r \Ra  \Pi$}
\DisplayProof
\end{center}
Using the subderivation $\delta_1^r$ of $\delta_1$ and cuts on $A$ and $B$ we replace this unsound inference in $\delta_2[\Sigma_r]$ by
\begin{equation}\label{principal}
\AxiomC{$\Sigma,  \Sigma_r^* \Ra  A, \Pi$}
\AxiomC{$\vdots\,\delta_1^r$}
\noLine
\UnaryInfC{$A, \Sigma_r\Ra B$}
\RightLabel{$\cut^*$}
\BinaryInfC{$\Sigma, \Sigma_r^*,\Sigma_r, 
\Ra  B, \Pi$}
\AxiomC{$\Sigma, \Sigma_r^{*}, B \Ra  \Pi$}
\RightLabel{$\cut^*$}
\BinaryInfC{$\Sigma, \Sigma_r^*,\Sigma_r\Ra  \Pi$}
\DisplayProof
\end{equation}
In this way we obtain a $\BiInt$-proof of $\Gamma,\Sigma_r\Sa\Delta$. By construction and \reduce all non-analytic cuts in this proof are on proper subformulas of $A\imp B$. So by the induction hypothesis, we obtain a locally analytic proof of
%Also replace every non-critical inference that weakens $A\imp B$ with a weakening on $\Sigma_r$ instead.
\begin{equation}\label{from-epsilon}
\Gamma,\Sigma_r\Ra  \Delta
\end{equation}

In particular, if $\Sigma_r=\emptyset$ we are already done. For the remainder of the proof let us therefore assume that all critical antecedents $\Sigma_1,\ldots,\Sigma_n$ are nonempty.

%Note that in the above, contractions on $A\imp B, A\imp B$ in $\epsilon_{A\imp B}$ are replaced by contractions on $\Sigma_r, \Sigma_r$.
%Also, since we have replaced every predecessor $A\imp B$ in $\epsilon_{A\imp B}$ with $G_r$, and since the $A\imp B$ appeared in every sequent from leaf to root, so does $G_r$.

\textbf{Step 2.} Let $(D_1,\ldots,D_n)\in \Sigma_1\times\ldots\times\Sigma_n$ and consider the tree $\delta_1[D_1,\ldots,D_n]$ with root $\Gamma\Sa D_1,\ldots,D_n,\Delta$. Again using \sound we make $\delta_1[D_1,\ldots,D_n]$ into a $\BiInt$ proof by replacing any critical inference turned unsound 
\[
\infer[unsound]{\Sigma_r\Sa D_1,\ldots,D_n}
    {
    \Sigma_r,A\Sa B
    }
\]
with a cut-free proof of $\Sigma_r\Ra D_1,\ldots,D_n$, which exists (by Lemma~\ref{lem:genaxioms}) since $D_r\in\Sigma_r$. Again all non-analytic cuts in the resulting $\BiInt$-proof are on proper subformulas of $A\imp B$ by $\reduce$, and so we can use the induction hypothesis to obtain a locally analytic proof of

%Simultaneously replace each $\Sigma_r \Ra  A\imp B$ in $\delta_{A\imp B}$ with the following to obtain a locally analytic proof of $\Gamma\Ra D, g_1, \ldots, g_n$.
% \begin{center}
% \AxiomC{$\Sigma_r \Ra  g_r$}
% \RightLabel{w}
% \UnaryInfC{$\Sigma_r\Ra g_1,\ldots,g_r,\ldots,g_n$}
% \DisplayProof
% \end{center}
% In this way we obtain locally analytic proofs of sequents in the following set (note that contractions on $A\imp B, A\imp B$ in $\delta_{A\imp B}$ are replaced by contractions on $g_1,\ldots,g_n, g_1, \ldots, g_n$).
\begin{equation}\label{from-delta}
\Gamma\Sa D_1,\ldots,D_n,\Delta
\end{equation}

\textbf{Step 3.} We now combine all instances of (\ref{from-epsilon}) and (\ref{from-delta}) using cuts on formulas in $\bigcup_{i=1}^n\Sigma_i$ to obtain a locally analytic proof of $\Gamma\Sa\Delta$.

Enumerate the elements of $\Sigma_r$ as $D_r^1, D_r^2, \ldots, D_r^{m_r}$.
For every $(D_2,\ldots,D_n)\in \Sigma_2\times\ldots\times \Sigma_n$, obtain a derivation of
\begin{equation}\label{next}
\Gamma\Sa D_2,\ldots,D_n,\Delta
\end{equation}
as follows by cuts on the elements of $\Sigma_1$. In the following, the cut-formula has been boxed to make it easier to identify.
\begin{center}
\begin{footnotesize}
\AxiomC{\hspace{2cm}(\ref{from-delta})\hspace{-2cm}}
\noLine
\UnaryInfC{$\Gamma\Ra \text{\fbox{$D_1^{m_{1}}$}} D_2 \ldots D_n, \Delta$ \hspace{-4cm}}
\AxiomC{(\ref{from-delta})}
\noLine
\UnaryInfC{$\Gamma\Ra \text{\fbox{$D_1^2$}} D_2 \ldots D_n, \Delta$}
\AxiomC{(\ref{from-delta})}
\noLine
\UnaryInfC{$\Gamma\Ra \text{\fbox{$D_1^1$}} D_2 .. D_n, \Delta$}
\AxiomC{(\ref{from-epsilon})}
\noLine
\UnaryInfC{$\text{\fbox{$D_1^1$}} D_1^2 ..
D_1^{m_{1}}, \Gamma\Ra \Delta$}
\RightLabel{$\cut^*$}
\BinaryInfC{$\text{\fbox{$D_1^2$}}, D_1^3 \ldots D_1^{m_{1}}, \Gamma\Ra D_2 \ldots D_n, \Delta$}
\RightLabel{$\cut^*$}
\BinaryInfC{$D_1^3 \ldots D_1^{m_{1}}, \Gamma\Ra  D_2 \ldots D_n, \Delta$}
\noLine
\UnaryInfC{$\vdots$}
\noLine
\UnaryInfC{$\text{\fbox{$D_1^{m_{1}}$}},  \Gamma\Ra   D_2 \ldots D_n, \Delta$}
\RightLabel{$\cut$}
\BinaryInfC{$\Gamma\Ra D_2 \ldots D_n, \Delta$}
\DisplayProof
\end{footnotesize}
\end{center}
Next, for every $(D_3,\ldots,D_n)\in \Sigma_3\times\ldots\times \Sigma_n$, proceed as follows by cuts on the elements of $\Sigma_2$:
\begin{center}
\begin{footnotesize}
\AxiomC{\hspace{2cm}(\ref{next})\hspace{-2cm}}
\noLine
\UnaryInfC{$\Gamma\Ra \text{\fbox{$D_2^{m_{2}}$}}, D_3 \ldots D_n, \Delta$ \hspace{-4cm}}
\hspace{-0.1cm}
\AxiomC{(\ref{next})}
\noLine
\UnaryInfC{$\Gamma\Ra \text{\fbox{$D_2^2$}} D_3 \ldots D_n, \Delta$}
%\hspace{-2cm}
%
\AxiomC{(\ref{next})}
\noLine
\UnaryInfC{$\Gamma\Ra \text{\fbox{$D_2^1$}} D_3 .. D_n, \Delta$}
%\hspace{-0.1cm}}
\AxiomC{(\ref{from-epsilon})}
\noLine
\UnaryInfC{$\text{\fbox{$D_2^1$}} D_2^2 .. D_2^{m_{2}}, \Gamma\Ra \Delta$}
\RightLabel{$\cut^*$}
\BinaryInfC{$\text{\fbox{$D_2^2$}}, D_2^3 \ldots D_2^{m_{2}}, \Gamma\Ra D_3, \ldots, D_n, \Delta$}
\RightLabel{$\cut^*$}
\BinaryInfC{$D_2^3 \ldots D_2^{m_{2}}, \Gamma\Ra   D_3 \ldots D_n, \Delta$}
\noLine
\UnaryInfC{$\vdots$}
\noLine
\UnaryInfC{$\text{\fbox{$D_2^{m_{2}}$}}, \Gamma \Ra  D_3 \ldots D_n, \Delta$}
\RightLabel{$\cut$}
\BinaryInfC{$\Gamma\Ra D_3, \ldots, D_n, \Delta$}
\DisplayProof
\end{footnotesize}
\end{center}
In this way we ultimately get $\Gamma\Sa\Delta$. Every introduced cut is on a formula from a critical antecedent $\Sigma_r$ in $\delta_1$ so by $\critical$ the cut is analytic or eliminable by induction hypothesis. Hence there is a locally analytic proof of $\Gamma\Sa\Delta$.

$\blacktriangleright$ $C$ is $A\coimp B$.
Similar to the previous case (by symmetry). 
This concludes the case distinction and thus the proof of the theorem. 
\qed
\end{proof}

\begin{theorem}
$\BiInt$ has the analytic cut property.
\end{theorem}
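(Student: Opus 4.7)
The plan is to bootstrap Theorem~\ref{thm:main} by a simple induction on the number of non-analytic cuts appearing in a given $\BiInt$-proof. Given an arbitrary proof $\delta$ of a sequent $\Gamma \Sa \Delta$, if $\delta$ contains no non-analytic cuts then it is already locally analytic and we are done. Otherwise, I would select a \emph{topmost} non-analytic cut in $\delta$, that is, a non-analytic cut such that no other non-analytic cut appears strictly above it.

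Let $\delta'$ be the subproof of $\delta$ rooted at the conclusion of this topmost non-analytic cut. By choice of the cut, every cut strictly above it is analytic, so the two immediate subproofs above the cut are locally analytic. Hence $\delta'$ satisfies the hypothesis of Theorem~\ref{thm:main}: it is locally analytic aside from a single non-analytic cut as its lowest inference. Applying the theorem yields a locally analytic proof $\delta''$ of the same endsequent as $\delta'$. Replacing $\delta'$ by $\delta''$ inside $\delta$ produces a new proof of $\Gamma \Sa \Delta$ with strictly fewer non-analytic cuts: the topmost one has been removed and no new non-analytic cuts were introduced.

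Iterating this replacement terminates because the number of non-analytic cuts is a well-founded measure on natural numbers, and each step strictly decreases it. The final proof has no non-analytic cuts, i.e.\ it is locally analytic, which is precisely the analytic cut property for $\BiInt$.

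I do not expect a genuine obstacle here: all the real work has already been done in Theorem~\ref{thm:main}. The only point that needs a moment's care is that substituting $\delta''$ for $\delta'$ inside $\delta$ does not resurrect any non-analyticity elsewhere in the proof tree; this is immediate since the substitution is local and leaves every inference below and outside $\delta'$ untouched, while $\delta''$ itself is locally analytic by construction.
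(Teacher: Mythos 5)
Your argument is correct and is essentially the paper's own proof: the paper likewise eliminates non-analytic cuts one at a time, starting from an uppermost one, by applying Theorem~\ref{thm:main} to the subproof it roots, with termination by the decreasing count of non-analytic cuts. Your added observation---that analyticity of the remaining cuts is a local property of each cut's own conclusion and hence unaffected by the replacement---is exactly the point that makes the one-line induction go through.
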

\begin{proof}
Starting with the uppermost one, all non-analytic cuts in a $\BiInt$-proof can be eliminated by repeatedly applying Theorem~\ref{thm:main}.\qed
\end{proof}

After an uppermost non-analytic cut in a proof is replaced with analytic cuts by the method of Theorem~\ref{thm:main}, it is in general \emph{not} true that their cut-formulas are subformulas of the endsequent. After all, there might be non-analytic cuts further down in the proof. In the course of restricting those lower non-analytic cuts, higher cuts will be revisited and modified again. It is only once the last non-analytic cut has been restricted that all cuts become ``globally analytic'' i.e.\ cut-formulas that are subformulas of the endsequent.

\section{The Core Ideas}
\label{attempt}
The above proof of the analytic cut property for $\BiInt$ is not tailored to this specific calculus. Here we identify the key properties that allow the proof to go through, and apply them to the case of the modal logic $\SF$. The first three properties are well-known to any proof theorist.

\begin{enumerate}
    \item \textbf{principal reductions:} replace a cut that is principal in left and right premise by cuts on smaller formulas
    (see Sect.~\ref{sec-idea}).
    
    \item \textbf{invertibility} of certain left and right rules: given a proof of the conclusion of a logical rule there are proofs of its premises with no new cuts.
    %In $\BiInt$ we (implicitly) used the invertibility of $\land_L$, $\land_R$ and \lor_L$,  $\lor_R$. We also used that $\top$ ($\bot$) can be removed from an antecedent (resp. succedent). 
    %can be seen as an invertibility argument (this becomes clear if one includes the repective rules $\top_L$ and $\bot_R$, which are redundant in the presence of weakening).
  \item \textbf{axiom expansion:} completeness of
atomic axioms.  
%    \begin{center}
%        $\infer[\top_L]{\Gamma,\top\Sa\Delta}{\%Gamma\Sa\Delta}$\qquad %$\infer[\bot_R]{\Gamma\Sa\bot,\Delta}{\Gamma\Sa%\Delta}$
%    \end{center}}
\end{enumerate}
If the above properties hold for all connectives then cut-elimination is close: given a cut on a formula $C$ whose left and right rules are both invertible, replace the cut-formula in each premise with its immediate subformulas using invertibility and then apply the principal cut reduction. 

 All rules for connectives in $\BiInt$ admit principal reductions, \emph{including} $\imp$ and $\coimp$; e.g.,
    \[
    \infer[\cut]{\Gamma\Sa\Delta}
        {
        \infer[\coimp_R]{\Gamma\Sa A\coimp B,\Delta}
            {
            \Gamma,B\Sa\Delta
            &
            \Gamma\Sa A,\Delta
            }
        &
        \infer[\coimp_L]{A\coimp B\Sa\Delta}
            {
            A\Sa B,\Delta
            }
        }
    \]
    can be replaced by
    \[
    \infer[\cut]{\Gamma\Sa\Delta}
        {
        \Gamma\Sa A,\Delta
        &
        \infer[\cut^*]{\Gamma,A\Sa\Delta}
            {
            A\Sa B,\Delta
            &
            \Gamma,B\Sa\Delta
            }
        }
    \]
Moreover in the proof for $\BiInt$ we (implicitly) used the invertibility of $\land_L$, $\land_R$ and $\lor_L$,  $\lor_R$. 
We also used that $\top$ ($\bot$) can be removed from an antecedent (resp. succedent). 
Finally, axiom expansion holds for $\BiInt$ (Lemma~\ref{lem:genaxioms}).
%We used this for $\land$ and $\lor$ in $\BiInt$.
However, the right rule of $\imp$ and the left rule of $\coimp$ are not invertible. 
What comes to the rescue are the following properties, here formulated for an arbitrary connective $\circ$:
\begin{enumerate}
  \setcounter{enumi}{3}
    \item \textbf{right-compatible:} Let $X$ and $Y$ be multisets of formulas that are permitted in the antecedent context  and succedent context  of the \textit{right rule} for $\circ$ respectively. For every rule instance $r$ of the calculus (excluding initial sequents) with non-empty antecedent resp.\ succedent context: appending $(X,Y)$ to the (antecedent,succedent) resp. (succedent,antecedent) everywhere yields a valid rule instance.
    
    \item \textbf{left-compatible}: Replace ``right rule'' in the above with ``left rule''.
\end{enumerate}

These properties relate the context restrictions of the introduction rule of a connective with the context restrictions of every other rule.
In particular, the motivation for `switching the contexts' from (antecedent,succedent) to (succedent,antecedent) is that the antecedent $\Sigma$ of the conclusion $\Sigma\Ra A\imp B$ of a critical inference must play a dual role, appending antecedent \textit{and} appending succedent of other rule instances. It is this that enables a picture as (\ref{cut-reduction-simple}).

\begin{example}
$\imp$ is right-compatible. 
%Consider the rule $\imp_R$. 
The antecedent context of the rule $\imp_R$ can be instantiated by any multiset~$X$. The succedent context can only be instantiated by the empty multiset. We must ensure that appending $(X,\emptyset)$ to every rule instance with non-empty antecedent context  yields a rule instance (no need to check any $\coimp_L$ rule instance as it never has a non-empty antecedent context); also appending $(\emptyset,X)$ to every rule instance that has a non-empty succedent context (ruling out a rule instance of $\imp_R$) yields a rule instance.

Similarly $\coimp$ is left-compatible.

Another example ($\Box$ in $\SF$ is right-compatible) appears in Section~\ref{sec-SF}.
\end{example}

If a connective $\circ$ has a principal cut reduction, axiom expansion and a compatibility property, then a non-analytic cut on $C$ with main connective $\circ$ might be simplified along the following lines: trace $C$ upwards on each premise until principal, perform a principal reduction \emph{across different branches} (cf.\ (\ref{principal}) in main proof), then propagate the appended context downwards to the endsequent (relying on compatibility). Cut  (cf.\ Step 3 in main proof) each appended formula using a proof obtained (relying on compatibility) by the context appended to the `other side' i.e. antecedent appended to succedent.

To summarise, the argument works for $\BiInt$ because (a)~all connectives of $\BiInt$ admits principal reductions,  (b)~every connective of $\BiInt$ is either invertible $\{\top,\bot,\land,\lor\}$ or has some compatibility property $\{\imp,\coimp\}$, and (c) axiom expansion holds.

\subsection{An application: the case of $\SF$}\label{sec-SF}

A sequent calculus $\SF$~\cite{OhnMat57} for the modal logic $S5$ extends Gentzen's calculus $\LK$ for classical propositional logic with the following two rules for the modality~$\Box$:
\begin{equation}\label{SFrules}
\infer[T]{\Gamma,\Box A\Sa\Delta}
    {
    \Gamma,A,\Sa\Delta
    }
\qquad\qquad\qquad
\infer[5]{\Box\Gamma\Sa\Box A,\Box\Theta}     {
    \Box\Gamma\Sa A,\Box\Theta
    }
\end{equation}
As is well know, cut elimination fails but as shown by Takano~\cite{Tak92} the calculus
%Takano~\cite{Tak92} showed that it 
has the analytic cut property.
%using a syntactic argument similar to ours for $\BiInt$. 
%
Cut-restriction provides a proof too. 
%As the proofs follows essentially the same steps as for $\BiInt$, we  
%\agata{check that the identified properties are satisfied, and show an example case.
%identify the core ideas.
%Let us discuss along the lines of the general discussion in the previous section why Takano's result holds. First, 

First notice that  $\SF$ admit principal cut reductions for all connectives, and axiom expansion. Also, the left and right rules of $\land$ and $\lor$ are invertible.
%just as in classical logic (or $\BiInt$). 
As the implication $A\imp B$ is classical, it can be considered as a derived connective and therefore removed from the signature. The role of the problematic connectives for cut-elimination ($\imp$
and $\coimp$ in $\BiInt$) is taken here by the modality $\Box$.

A boxed formula is principal in the antecedent by~$T$ (``left rule'') and in the succedent by~$5$ (``right rule'').
In the latter, the context consists of boxed formulas only, namely $\Box\Gamma$ in the antecedent and $\Box\Theta$ in the succedent. For every non-initial rule instance in $\SF$, appending arbitrary multisets of boxed formulas to the antecedent and succedent yields a new rule instance. 
%$(\Box\Gamma,\Box\Theta)$ and $(\Box\Theta,\Box\Gamma)$. 
It follows that $\Box$ is \textbf{right-compatible}. Once this is noted, essentially the same proof as for Theorem~\ref{thm:main} also works for $\SF$.
For illustration, we picture a simple case below.

%As all rules for classical connectives are locally analytic hp-invertible (see the Appendix), the only problematic cuts to handle are those on boxed formulas. Similarly as the cases for the implication and the co-implication in $\BiInt$, we trace the boxed cut formula $\Box A$ upwards until it is principal in the rule $T$ or $5$. This time the critical set of an inference of $5$ is defined to be $\Box\Gamma\cup\Box\Delta$. The cut on $\Box A$ can then be replaced by cuts on all critical formulas. We illustrate one simple case in which there is only one critical formula $\Box\gamma$:

%The rules for the $\Box$ modalities are not locally analytic hp-invertible. We therefore prove that they satisfies the conditions of Theorem~\ref{generic}.
%\begin{corollary}
%The calculus for $\SF$ in \cite{OhnMat57} has the analytic cut property.
%\end{corollary}

%Similarly as the cases for the implication and the co-implication in $\BiInt$, we trace the boxed cut formula $\Box A$ upwards until it is principal in the rule $T$ or $5$. This time the critical set of an inference of $5$ is defined to be $\Box\Gamma\cup\Box\Delta$. The cut on $\Box A$ can then be replaced by cuts on all critical formulas. We illustrate one simple case in which there is only one critical formula $\Box\gamma$:
%@{\hspace{1em}to\hspace{1em}}
\begin{center}
\begin{tabular}{c@{\hspace{1em}to\hspace{1em}}c}
\begin{footnotesize}
$
\infer[\cut]{\Gamma\Sa\Delta}
    {
    \infer*{\Gamma\Sa\Box A,\Delta}
        {
        \infer[5]{\Box D\Sa\Box A}
            {
            \Box D\Sa A
            }
        }
    &
    \infer*{\Gamma,\Box A\Sa \Delta}
        {
        \infer[T]{ \Sigma,\Box A\Sa \Pi}
            {
            \Sigma,A\Sa \Pi
            }
        }
    }
$
\end{footnotesize}
%\end{center}
&
%\AxiomC{to}
%\noLine
%\UnaryInfC{\phantom{A}}
%\noLine
%\UnaryInfC{\phantom{A}}
%\noLine
%\UnaryInfC{\phantom{A}}
%\noLine
%\UnaryInfC{\phantom{A}}
%\DisplayProof
%&
\begin{footnotesize}
$
\infer[cut]{\Gamma\Sa\Delta}
    {
    \infer*{\Gamma\Sa\Box D,\Delta}
        {
        \Box D\Sa\Box D
        }
    &
    \infer*{\Gamma,\Box D\Sa\Delta}
        {
        \infer[cut*]{ \Sigma,\Box D\Sa \Pi}
            {
            \Box D\Sa A
            &
            \Sigma,A\Sa \Pi
            }
        }
    }
$
\end{footnotesize}
\end{tabular}
%\vspace{-1cm}
\end{center}
%Again we can argue that if the cut on $\Box A$ was the uppermost non-analytic one, then either $\Box\gamma$ is a proper subformula of $\Box A$, or $\Box\gamma$ is a subformula of $\Gamma\cup\Delta$.

%It is crucial here that $\botpart{\delta}{\Box A}[\Sigma]$ and $\botpart{\epsilon}{\Box A}[\Sigma]$ are only well-defined derivations if $\Sigma$ consists solely of boxed formulas, as adding unboxed formulas to an inference of $5$ breaks soundness. In the case of the above transformation this is guaranteed since every critical formula comes from the context of an inference of $5$, and therefore must be boxed.

%\section{Conclusion}
\medskip
\noindent
{\bf Concluding Remark:}
We have investigated the first level of cut-restriction, adapting cut-elimination to obtain analytic cuts (when elimination is not possible). 
Our proof makes use of the structural rules of weakening and contraction. This leaves open the question of adapting it to substructural logics. 

%The term cut-restriction was introduced by
%Ciabattoni \textit{et al.}. They gave
%an indirect algorithm to obtain sequent calculi with restricted cuts (parametric on the end formula but not necessarily subformulas) for substructural, intermediate and modal logics. That work is essentially a composition of two algorithms: the general algorithms to transform Hilbert systems into cut-free  hypersequent calculi~\cite{CiaGalTer08,Lah13}, and the reduction of cut-free hypersequent calculi to sequent calculi with restricted cuts. This is \textit{in-principle evidence} that cut-restriction is possible on a broad and ambitious scale. 
%
\textit{What is the next level of cut-restriction to investigate?} Sequent calculi requiring mild violations of analytic cut, as in the case, e.g., of the calculi for the modal logics $K5$, $K5D$, and $S4.2$~\cite{Tak01,Tak19}. Here are some intuitions for $K5$. Takano~\cite{Tak01} observed that analytic cuts do not suffice but cuts on $K5$-subformulas do. These are subformulas of the endsequent, or $\Box\lnot\Box B$ or $\lnot\Box B$ where $\Box B$ is a subformula of some boxed formula that is a subformula of the endsequent. 

The sequent calculus for $K5$ 
%(see (\ref{SFrules})) 
extends $\LK$ with the following variant of $5$:
\begin{center}
\AxiomC{$\Gamma\Ra \Box\Theta, A$}
\RightLabel{$5^*$}
\UnaryInfC{$\Box\Gamma\Ra \Box\Theta, \Box A$}
\DisplayProof
\end{center}
Here is the main case (to simplify, assume one critical inference in each premise).
\begin{center}
\begin{footnotesize}
\AxiomC{$\Gamma'\Ra \Box\Theta',A$}
\RightLabel{$5^*$}
\UnaryInfC{$\Box\Gamma'\Ra \Box\Theta',\Box A$}
\noLine
\UnaryInfC{$\leftpremise$}
\noLine
\UnaryInfC{$\Gamma\Ra \Theta, \Box A$}
%
% space \AxiomC{$A, X'\Ra \Box Y', B$}
\AxiomC{$A \Ra \Box Y', B$}
\RightLabel{$5^*$}
\UnaryInfC{$\Box A \Ra \Box Y', \Box B$}
% space \UnaryInfC{$\Box A, \Box X'\Ra \Box Y', \Box B$}
\noLine
\UnaryInfC{$\rightpremise$}
\noLine
\UnaryInfC{no $5^*$ rule on featured branch}
\noLine
\UnaryInfC{$\Box A, X\Ra Y$}
\RightLabel{$\cut^*$}
\BinaryInfC{$\Gamma, X\Ra \Theta, Y$}
\DisplayProof
\end{footnotesize}
\end{center}
The rule $5^*$ is not invertible, nor is it left/right-compatible: adding the same formula to the premise and conclusion antecedent breaks the rule instance (as the rule appends a box when passing from premise to conclusion antecedent). This suggests applying cuts on formulas that, in $\SF$, we propagated downwards. For simplicity, in the following, read each multiset as a single formula:
\begin{center}
\begin{footnotesize}
\AxiomC{$\Box\Theta'\Ra\Box\Theta'$}
\RightLabel{$\lnot_R$}
\UnaryInfC{$\Ra\Box\Theta', \lnot \Box\Theta'$}
\RightLabel{$5^*$}
\UnaryInfC{$\Ra\Box\Theta', \Box\lnot \Box\Theta'$}
  \RightLabel{$\weak$}
\UnaryInfC{$\Box\Gamma'\Ra\Box\Theta', \Box\lnot \Box\Theta'$}
\noLine
\UnaryInfC{$\leftpremise+$}
%\UnaryInfC{$\vdots$}
\noLine
\UnaryInfC{$\Gamma\Ra\Theta,\text{\fbox{$\Box\lnot\Box\Theta'$}}$}
%
%\AxiomC{}
%\UnaryInfC{$\lnot\Box\Theta'\Ra\text{\fbox{$\Box\lnot\Box\Theta'$}}$}
%
\AxiomC{$\Box\Gamma'\Ra\Box\Gamma'$}
  \RightLabel{$\weak$}
\UnaryInfC{$\Box\Gamma'\Ra\Box\Theta',\Box\Gamma'$}
\noLine
\UnaryInfC{$\leftpremise+$}
%\UnaryInfC{$\vdots$}
\noLine
\UnaryInfC{$\Gamma\Ra\Theta,\text{\fbox{$\Box\Gamma'$}}$}
\hspace{-0.5cm}
\AxiomC{$\Gamma'\Ra \Box\Theta',A$}
\hspace{-0.5cm}
\AxiomC{$A \Ra \Box Y', B$}
% space \AxiomC{$A, X'\Ra \Box Y', B$}
\RightLabel{$\cut^*$}
\BinaryInfC{$\Gamma' \Ra \Box\Theta', \Box Y', B$}
% space \BinaryInfC{$\Gamma', X'\Ra \Box\Theta', \Box Y', B$}
\RightLabel{$\lnot_L$}
\UnaryInfC{$\Gamma' \lnot\Box\Theta'\Ra  \Box Y', B$}
% space \UnaryInfC{$\Gamma', X', \lnot\Box\Theta'\Ra  \Box Y', B$}
\RightLabel{$5^*$}
\UnaryInfC{$\text{\fbox{$\Box\Gamma'$}}, \Box\lnot\Box\Theta'\Ra  \Box Y', \Box B$}
% space \UnaryInfC{$\text{\fbox{$\Box\Gamma'$}}, \Box X', \Box\lnot\Box\Theta'\Ra  \Box Y', \Box B$}
\RightLabel{$\cut^*$}
\BinaryInfC{$\Gamma,  \text{\fbox{$\Box\lnot\Box\Theta'$}}\Ra \Theta,  \Box Y',\Box B$}
% space \BinaryInfC{$\Gamma, \Box X', \text{\fbox{$\Box\lnot\Box\Theta'$}}\Ra \Theta,  \Box Y',\Box B$}
%
\RightLabel{$\cut^*$}
\BinaryInfC{$\Gamma \Ra \Theta,\Box Y',\Box B$}
% space \BinaryInfC{$\Gamma, \Box X'\Ra \Theta,\Box Y',\Box B$}
\noLine
\UnaryInfC{$+\rightpremise+$}
\noLine
\UnaryInfC{$\Gamma, X\Ra \Theta, Y$}
\DisplayProof
\end{footnotesize}
\end{center}
The cut on $\Box\Gamma'$ is like those we encountered before i.e. on a formula occurring in the original proof. There is also a cut on $\Box\lnot\Box\Theta'$ while it was $\Box\Theta'$ that occurred in the original proof but this is not unexpected (given Takano's result).

\newcommand{\critical}{\textbf{tameness}\xspace}

A significant issue remains: showing
that $\Box\Theta'$ 
%is not a proper superformula of the endsequent 
is a subformula of the endsequent
(how to rule out a $K5$-subformula that is not a subformula?). A suitable irredundance property (and generalised \critical property) seems required.
%. A suitable irredundance property seems required 
%to rule this out. 
%This is the topic of future work.

%if $\Box\Theta'$ is itself a proper superformula of the endsequent (introduced upwards through some other $K5$-subformula cut) then the cut on $\Box\lnot\Box\Theta'$ would be too large. A suitable irredundance property seems required to rule this out. This is the topic of future work.

%Proposed new proof
%\section{Proposed new proof}
%\input{proposed}

%\input{k5.tex}
%
\bibliographystyle{abbrv}
\bibliography{mybib.bib}
\end{document}